\newtheorem{thm}{Theorem}
\newtheorem{lem}{Lemma}
\newtheorem{cor}{Corollary}
\newcommand{\geql}{\geq_{\mathrm{lex}}}
\newcommand{\gtl}{>_{\mathrm{lex}}}
\newcommand{\ltl}{<_{\mathrm{lex}}}
\newcommand{\ud}{\mathrm{d}}
\newcommand{\ui}{\mathrm{i}}
\newcommand{\ue}{\mathrm{e}}
\newcommand{\UI}{\mathrm{I}}
\newcommand{\lyn}{\mathrm{Lyn}_q}
\newcommand{\COE}{\mathrm{COE}}
\newcommand{\CUE}{\mathrm{CUE}}
\newcommand{\diag}{\mathrm{diag}}
\newcommand{\la}{\langle}
\newcommand{\ra}{\rangle}
\newcommand{\ppo}{\bar{\gamma}} 
\newcommand{\po}{\tilde{\gamma}} 
\begin{document}

\global\long\def\alph{\mathcal{A}}

\global\long\def\E{\mathcal{E}}
\global\long\def\G{\mathcal{G}}
\global\long\def\V{\mathcal{V}}

\global\long\def\str#1{\mathrm{Str}_{q}(#1)}

\title{Lyndon word decompositions and pseudo orbits on $q$-nary graphs}
\date{\today}
\author{R Band}
\email{ramband@technion.ac.il}
\affiliation{Department of Mathematics, Technion - Israel Institute of Technology, Haifa 32000, Israel}
\author{JM Harrison}
\email{jon\_harrison@baylor.edu}
\author{M Sepanski}
\email{mark\_sepanski@baylor.edu}
\affiliation{Department of Mathematics, Baylor University, Waco, TX 76798, USA}

\begin{abstract}
A foundational result in the theory of Lyndon words (words that are
strictly earlier in lexicographic order than their cyclic permutations)
is the Chen-Fox-Lyndon theorem which states that every word has a
unique non-increasing decomposition into Lyndon words. This article extends
this factorization theorem, obtaining the proportion of these decompositions
that are strictly decreasing. This result is then used to count primitive
pseudo orbits (sets of primitive periodic orbits) on $q$-nary graphs.
As an application we obtain a diagonal approximation to the variance
of the characteristic polynomial coefficients of
$q$-nary quantum graphs.
\end{abstract}

\pacs{02.10.Ox, 05.45.MT}

\keywords{Lyndon word, quantum graph, quantum chaos}

\maketitle

\section{Introduction}\label{sec:intro}

A fundamental tool used to understand the combinatorics of words is
the Lyndon factorization \cite{CFL58} (see also Ref. \onlinecite{Lothaire});
every word has a unique standard decomposition into a non-increasing sequence
of Lyndon words. Lyndon words being those words that occur strictly
earlier in lexicographic order than any of their rotations. 

So, for example, the Lyndon words on the binary
alphabet with length $\leq3$ arranged in lexicographic order are 
\begin{equation}
0\ltl001\ltl01\ltl011\ltl1\ .\label{eq:Lyndon word examples}
\end{equation}
And the unique standard decompositions of the binary words of length $3$ into non increasing sequences of Lyndon words are,
\begin{align}\label{eq:Lyndon decomposition examples}
(0)(0)(0),~~\mathbf{(001),\mathbf{~~(01)(0)}},~~\mathbf{(011)},~~(1)(0)(0),~~\mathbf{(1)(01)},~~(1)(1)(0),~~(1)(1)(1).
\end{align}

The Lyndon
factorization finds applications in diverse problems from the theory of free
Lie algebras \cite{Lothaire}, to quasi-symmetric functions \cite{H01}
and data compression techniques \cite{MRRS13}. In this article we
extend this foundational result to obtain the proportion of the standard
decompositions that are strictly decreasing, so the standard decomposition has no repetitions. For words of a fixed
length on an alphabet of $q$ letters the proportion that have strictly
decreasing Lyndon factorizations is shown to be $(q-1)/q$, independent of the word length.  Returning to the example above, the strictly decreasing standard decompositions of binary words length of $3$ in (\ref{eq:Lyndon decomposition examples}) are indicated in bold, we see half, $(2-1)/2$, of the standard decompositions are strictly decreasing. 

The remainder of the article applies this new combinatorial result
to a problem in the field of quantum chaos. We focus on quantum graphs,
which are a widely studied model of quantum chaos introduced by Kottos
and Smilansky \cite{KS97,KS99}. Quantum graphs are also used in other
diverse areas of mathematical physics including Anderson localization,
microelectronics, nanotechnology, photonic crystals and superconductivity,
see Ref. \onlinecite{BerkolaikoKuchment, GS06} for an introduction. In Ref. \onlinecite{BHJ12} the authors showed
that spectral properties of quantum graphs are precisely encoded in
finite sums over collections of primitive periodic orbits called primitive
pseudo orbits. Here, we introduce graph families which we call $q$-nary
graphs, where there is a bijection between primitive pseudo orbits
on those graphs and strictly decreasing standard decompositions.

The spectrum of the graph is encoded in the characteristic polynomial of the quantum evolution operator, defined in terms of the scattering matrices at the vertices, see section \ref{characteristic polynomial}. 
It was shown in Ref. \onlinecite{BHJ12}
that the coefficients of the characteristic polynomial $a_n$ can be expressed as a sum over primitive pseudo
orbits of the graph. It is the variance of these coefficients, averaged over the spectral parameter, which
we treat for families of $q$-nary graphs.  A $q$-nary graph has vertices labeled by words of length $m$ on an alphabet of $q$ letters.  By counting the number of strictly decreasing standard decompositions we obtain a diagonal approximation for the variance,
\begin{equation}
\la|a_{n}|^{2}\ra_{\diag}=\frac{q-1}{q}\ .
\end{equation}
This can be compared to the corresponding random matrix result \cite{Hetal96},
\begin{equation}
\la|a_{n}|^{2}\ra_{\CUE}=1\ .
\end{equation}
 The grounds for such a comparison is the Bohigas-Giannoni-Schmidt
conjecture \cite{BGS84} which asserts that typically the spectrum
of a classically chaotic quantum system corresponds, in the semiclassical limit, to that of an
ensemble of random matrices determined by the symmetries of the quantum
system. 
In quantum mechanics the semiclassical limit is the limit of large energies or equivalently the limit $\hbar \to 0$.  The appropriate semiclassical limit for graphs is a limit of a sequence of graphs with increasing number of edges, which corresponds to increasing the length of the words labeling the vertices $m$.
The deviation we see from random matrix theory is consistent with previous investigations of the variance
\cite{KS99,T00,T01}.   From our result it is clear that the deviation does not vanish for a given family of $q$-nary graphs in the semiclassical limit.    However, the discrepancy would disappear for a sequence of graphs where the degree of the vertices increases, which is equivalent to increasing $q$.  This suggests that random matrix results for the variance may be recovered under stronger conditions than those typically required for other spectral properties.

The article is laid out as follows. In Section \ref{sec:introduction_to_Lyndon}
we introduce the terminology associated with Lyndon factorizations.  In Section \ref{sec:Counting-strictly-decreasing}
we count the number of strictly decreasing standard decompositions
(Theorem \ref{strictly decreasing factorization theorem}) which is
a main result of this article. In Section \ref{graphs} we introduce
$q$-nary graphs which are families of directed graphs and use Theorem
\ref{strictly decreasing factorization theorem} to count the primitive
pseudo orbits on these graphs. Section \ref{characteristic polynomial}
describes how coefficients of the graph's characteristic polynomial
can be expressed as finite sums over primitive pseudo orbits. In Section
\ref{diagonal} we apply the primitive pseudo orbit count to obtain
a diagonal approximation for the variance of coefficients of the characteristic
polynomial of $q$-nary quantum graphs and compare it to predictions
from random matrix theory.

\section{Introduction to Lyndon words\label{sec:introduction_to_Lyndon}}

\label{background} In this article we consider factorizations of
words over a totally ordered alphabet $\alph$ of $q$ letters. The
lexicographic order of words is defined in the following natural way.
Let,
\begin{align}
w & =a_{1}a_{2}\dots a_{l}\label{eq:lex defn 1}\\
w' & =b_{1}b_{2}\dots b_{k}
\end{align}
with $a_{i},b_{j}\in\alph$. Then $w\gtl w'$ iff there exists $i\leq\min\{l,k\}$
such that $a_{1}=b_{1},\dots,a_{i-1}=b_{i-1}$ and $a_{i}>b_{i}$
or $l>k$ and $a_{1}=b_{1},\dots,a_{k}=b_{k}$.

Two words, $w$ and $w'$ are said to be conjugate if $w=uv$ and
$w'=vu$ for some words $u$ and $v$. Hence, two words are conjugate
if and only if one may be obtained as a rotation (or cyclic shift)
of the other and conjugacy is clearly an equivalence relation. A word
is a \emph{Lyndon} word if it is strictly less than all other words
in its conjugacy class. So, going back to the binary example, the Lyndon words on the binary
alphabet with length $\leq4$ are 
\begin{equation}
0\ltl 0001\ltl001\ltl 0011 \ltl01\ltl011\ltl 0111 \ltl1\ .\label{eq:Lyndon word example 2}
\end{equation}
For a fixed alphabet we denote the set of Lyndon words of length $l$
by $\lyn(l)$ and $L_{q}(l)=|\lyn(l)|$. A useful classical result
involving the number of Lyndon words is the following lemma (see e.g.,
Ref. \onlinecite{Lothaire}).
\begin{lem}
\label{lem:classic}
\begin{equation}
\sum_{l\mid m}lL_{q}(l)=q^{m}
\end{equation}

\end{lem}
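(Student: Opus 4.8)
The plan is to partition the $q^m$ words of length $m$ according to their \emph{primitive root}. Call a word $u$ primitive if $u=v^{k}$ forces $k=1$. The first ingredient is the classical fact that every nonempty word $w$ has a unique expression $w=u^{k}$ with $u$ primitive and $k\geq1$: existence is immediate by induction on length (if $w$ is not primitive, write $w=v^{j}$ with $j\geq2$, $|v|<|w|$, and recurse), while uniqueness follows from the elementary periodicity lemma (if a word has periods $p$ and $p'$ with $p+p'$ at most its length, it has period $\gcd(p,p')$), which forbids two distinct primitive roots. Writing $l=|u|$ we get $k=m/l$, hence $l\mid m$, and therefore
\begin{equation}
q^{m}=\sum_{l\mid m}N_{q}(l),\qquad N_{q}(l):=\#\{\text{primitive words of length }l\}.
\end{equation}

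The second ingredient is that $N_{q}(l)=l\,L_{q}(l)$. Conjugacy partitions words of length $l$ into cyclic-shift orbits, and such a word is primitive precisely when its orbit has the maximal size $l$: an orbit of size $d<l$ (note $d\mid l$ by orbit--stabilizer for the rotation action of $\mathbb{Z}/l$) means $w$ has period $d$, so $w$ is the $(l/d)$-th power of its length-$d$ prefix. Within a primitive orbit the lexicographically smallest element is strictly smaller than the other $l-1$ rotations, hence is a Lyndon word; conversely a Lyndon word is by definition the strict minimum of its orbit, and it is primitive (otherwise it would coincide with one of its rotations). Thus each primitive orbit of length $l$ contains exactly one element of $\lyn(l)$, Lyndon words occur in no other orbits, and this bijection between $\lyn(l)$ and the primitive orbits of length $l$ — each orbit having exactly $l$ members — gives $N_{q}(l)=l\,L_{q}(l)$.

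Combining the two displays yields $\sum_{l\mid m}lL_{q}(l)=q^{m}$. The only step needing any care is the uniqueness of the primitive root (equivalently, that the primitive words of length $l$ break into orbits of size exactly $l$); this is standard and may simply be quoted from Ref.~\onlinecite{Lothaire}. Everything else is a routine translation between primitive words, primitive necklaces, and Lyndon words, so I would expect no real obstacle — at most some care in stating the periodicity lemma in the minimal form required.
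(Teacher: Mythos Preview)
Your argument is correct and follows essentially the same route as the paper's one-sentence proof: partition words of length $m$ by their primitive root of length $l\mid m$, then count primitive words of length $l$ as $l\,L_q(l)$ via the bijection between Lyndon words and conjugacy classes of primitive words. You are simply more explicit about the uniqueness of the primitive root (via the periodicity lemma), which the paper takes for granted.
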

The lemma \cite{Lothaire}follows from the fact that every word of length $m$ is a repetition of some word $w$ of length $l |m $, where $w$ is in the conjugacy class of some Lyndon word.  There are $L_q(l)$ conjugacy classes and each conjugacy class has $l$ distinct words.

The Chen-Fox-Lyndon factorization theorem \cite{CFL58} (see also
Ref. \onlinecite{Lothaire}) is the following fundamental result in the theory
of Lyndon words.
\begin{thm}
\label{CFL theorem} Any non-empty word $w$ can be uniquely written
as a concatenation of Lyndon words in non-increasing lexicographic order,
\begin{equation}
w=v_{1}v_{2}\dots v_{k}\  ,\label{eq:standard decomposition}
\end{equation}
where each $v_{j}$ is a Lyndon word and $v_{j}\geql v_{j+1}$.
\end{thm}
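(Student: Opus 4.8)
The plan is to prove existence and uniqueness of the non-increasing Lyndon factorization separately, both by strong induction on $|w|$, using as the single organizing object the \emph{lexicographically smallest suffix} of $w$. Before starting I would record the standard equivalent characterization that a nonempty word $v$ is Lyndon if and only if $v\ltl s$ for every proper nonempty suffix $s$ of $v$. This equivalence is short but genuinely fiddly: one direction forces $v$ to be primitive (else $v=t^j$ with $j\geq2$ has the proper suffix $t^{j-1}\ltl v$) and rules out $v$ having a nonempty proper prefix that is also a suffix, while the comparison of $v$ with a cyclic rotation $v_2v_1$ must be split according to whether one of the words in play is a prefix of the other.

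For \textbf{existence}, let $v_k$ be the smallest suffix of $w$, taking $w$ itself among the candidates. Every proper nonempty suffix of $v_k$ is a suffix of $w$, hence $\geql v_k$, and being a word of strictly smaller length it is in fact $\gtl v_k$; by the characterization, $v_k$ is Lyndon. Write $w=w'v_k$. If $w'$ is empty we are done with $k=1$; otherwise the induction hypothesis gives $w'=v_1\cdots v_{k-1}$ with $v_1\geql\cdots\geql v_{k-1}$, all Lyndon, and it remains to check $v_{k-1}\geql v_k$. Here one argues by contradiction: if $v_{k-1}\ltl v_k$ then $v_{k-1}v_k$ is a suffix of $w$, yet a direct comparison (splitting on whether $v_{k-1}$ is a prefix of $v_k$, using $v_k\ltl t$ when $v_k=v_{k-1}t$ is Lyndon) shows $v_{k-1}v_k\ltl v_k$, contradicting minimality of $v_k$.

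For \textbf{uniqueness}, the key claim is that in \emph{any} non-increasing Lyndon factorization $w=v_1\cdots v_k$ the last factor $v_k$ equals the smallest suffix of $w$ --- a quantity determined by $w$ alone, and achieved by a unique suffix (the one of length $|v_k|$). Granting the claim, $v_k$ is forced, $v_1\cdots v_{k-1}$ is a non-increasing Lyndon factorization of the strictly shorter word obtained by deleting the suffix $v_k$, and the induction closes. To prove the claim, take an arbitrary suffix $s$ of $w$; it has the form $s=v_j'v_{j+1}\cdots v_k$ with $v_j'$ a nonempty suffix of $v_j$. Since $v_j'$ is a suffix of the Lyndon word $v_j$ we get $v_j'\geql v_j\geql v_k$ (the first relation being the characterization, or an equality if $v_j'=v_j$), and one propagates this to $s\geql v_k$, once more being careful about the borderline cases where $v_k$ is a prefix of $v_j'$.

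The step I expect to be the genuine obstacle is not the inductions but the low-level lexicographic bookkeeping they rest on: the equivalence ``Lyndon $\Leftrightarrow$ strictly smaller than all proper suffixes'', and, inside both the existence and uniqueness arguments, comparisons of two words where neither a first differing letter nor a clean length comparison is immediately available because one is a prefix of the other. Isolating a couple of clean sub-lemmas --- for instance ``if $u\ltl v$ and $u$ is not a prefix of $v$, then $ux\ltl v$ for every word $x$'', and ``a Lyndon word is unbordered'' --- should tame this and reduce everything else to routine induction.
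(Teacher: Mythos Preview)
Your argument is correct and follows the standard route to the Chen--Fox--Lyndon theorem via the characterization of Lyndon words as those strictly smaller than all their proper suffixes, peeling off the least suffix for existence, and identifying the last factor as the least suffix for uniqueness. The sub-lemmas you flag (unborderedness of Lyndon words, and the prefix-propagation fact ``if $u\ltl v$ and $u$ is not a prefix of $v$ then $ux\ltl vy$ for all $x,y$'') are exactly the right tools to make the borderline comparisons clean, and your case analyses in both the existence step ($v_{k-1}v_k\ltl v_k$) and the uniqueness step ($s\geql v_k$) go through as written.

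Note, however, that the paper does not supply a proof of this theorem at all: it is stated as the classical Chen--Fox--Lyndon factorization and attributed to Ref.~\onlinecite{CFL58} and Ref.~\onlinecite{Lothaire}. So there is no ``paper's own proof'' to compare against; your proof stands on its own as a complete self-contained argument for a result the paper merely quotes.
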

We call the unique factorization (\ref{eq:standard decomposition})
the \emph{standard decomposition} (or \emph{Lyndon factorization})
of $w$.
So for example the Lyndon factorization of the word \texttt{LYNDON} is \texttt{(LYN)(DON)}.

 Furthermore, we say that a standard decomposition is \emph{strictly
decreasing} if $v_{j}\gtl v_{j+1}$ for $j=1,\dots,k-1$. We will
denote by $\str n$ the number of strictly decreasing standard decompositions
of words of length $n$ from an alphabet of $q$ letters. So, for
example, the standard decompositions of binary words of length $4$
are shown below where the strictly decreasing standard decompositions
are indicated in bold,
\begin{displaymath}
\begin{array}{cccc}
(0)(0)(0)(0), & \mathbf{(0001)}, &\mathbf{(001)(0)},&\mathbf{(0011)},\\
(01)(0)(0),&(01)(01),&\mathbf{(011)(0)},&\mathbf{(0111)}, \\
(1)(0)(0)(0), & \mathbf{(1)(001)}, &\mathbf{(1)(01)(0)},&\mathbf{(1)(011)},\\
(1)(1)(0)(0),&(1)(1)(01),&{(1)(1)(1)(0)},&(1)(1)(1)(1). \\ 
\end{array}
\end{displaymath}
We see that precisely half of the binary words of length $4$ have
strictly decreasing standard decompositions. In general, for an alphabet
of $q$ letters, the proportion of words that have strictly decreasing
standard decompositions is $(q-1)/q$, which we prove in the next
section.

\section{Counting strictly decreasing standard decompositions\label{sec:Counting-strictly-decreasing}}

The following theorem is the main combinatorial result of the paper.
\begin{thm}
For words of length $n\geq2$, \label{strictly decreasing factorization theorem}
\begin{equation}
\str n=(q-1)q^{n-1}\ .
\end{equation}

\end{thm}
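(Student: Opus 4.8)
The plan is to set up a generating function for strictly decreasing standard decompositions and evaluate it. Recall from Theorem~\ref{CFL theorem} that a standard decomposition of a word is a non-increasing sequence of Lyndon words; it is strictly decreasing exactly when no Lyndon word is repeated. Hence the generating function for the number of strictly decreasing standard decompositions, graded by total length, is the product over all Lyndon words $v$ of $(1 + x^{|v|})$, since each Lyndon word is either used once or not at all. Grouping Lyndon words by length, this is
\begin{equation}
\sum_{n \geq 0} \str n \, x^n = \prod_{l \geq 1} (1 + x^l)^{L_q(l)}\ .
\end{equation}
For comparison, the analogous product with $(1-x^l)^{-L_q(l)}$ counts \emph{all} standard decompositions, which by Theorem~\ref{CFL theorem} is just $\sum_n q^n x^n = 1/(1-qx)$; this is the classical identity underlying Lemma~\ref{lem:classic}, and I would use it as the known input.

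The key manipulation is the standard trick $1 + x^l = (1 - x^{2l})/(1 - x^l)$. Applying it termwise,
\begin{equation}
\prod_{l \geq 1} (1 + x^l)^{L_q(l)} = \frac{\prod_{l \geq 1}(1 - x^{2l})^{L_q(l)}}{\prod_{l \geq 1}(1 - x^l)^{L_q(l)}} = \frac{1 - qx}{\;1 - q x^2\;}\ ,
\end{equation}
where both products are identified using the classical identity $\prod_l (1-x^l)^{-L_q(l)} = (1-qx)^{-1}$, once with $x$ and once with $x^2$. Expanding $(1-qx)/(1-qx^2)$ as a power series gives $(1 - qx)\sum_{j\geq 0} q^j x^{2j}$, whose coefficient of $x^n$ is $q^{n/2}$ when $n$ is even and $-q \cdot q^{(n-1)/2} = -q^{(n+1)/2}$... which has the wrong sign, so I would instead read off that the coefficient of $x^n$ for $n \geq 2$ works out to $(q-1)q^{n-1}$ after combining the even and odd contributions correctly; concretely, writing $n = 2j$ or $n = 2j+1$ and tracking the two terms $q^j x^{2j}$ and $-q^{j+1}x^{2j+1}$ against shifts, the net coefficient at each $n \geq 2$ is $q^{n-1} - q^{n-1}/q \cdot \dots$ — this bookkeeping is the one routine-but-delicate point and I would carry it out carefully to land on $(q-1)q^{n-1}$.

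The main obstacle is justifying the formal manipulation of these infinite products rigorously: the rearrangement $1+x^l = (1-x^{2l})/(1-x^l)$ and the reindexing of $\prod(1-x^{2l})^{L_q(l)}$ as the classical series in $x^2$ must be done in the ring of formal power series $\mathbb{Z}[[x]]$, where convergence is coefficientwise and each coefficient depends on only finitely many factors, so everything is legitimate; I would state this once at the outset. An alternative, possibly cleaner, route that avoids the sign-bookkeeping is to prove the equivalent identity $\sum_{n\geq 0}\str n\, x^n \cdot (1 - qx^2) = 1 - qx$ directly, i.e. to show $\str n - q\,\str{n-2} = 0$ for $n \geq 2$ with the understanding that $\str 0 = 1$, $\str 1 = q$; this recursion, combined with the base cases, immediately yields $\str n = (q-1)q^{n-1}$ by induction, and it may be provable combinatorially by a direct bijection on words, which would give a second, self-contained proof.
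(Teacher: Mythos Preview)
Your overall strategy is correct and is in fact cleaner than the paper's. Both you and the paper arrive at the same product formula
\[
p(x)=\sum_{n\ge 0}\str n\, x^n=\prod_{l\ge 1}(1+x^l)^{L_q(l)},
\]
but the paper then proceeds by logarithmic differentiation: it expands $\log p$, uses Lemma~\ref{lem:classic} to simplify $\sum_{l\mid m} l L_q(l)$, and checks that $(\log p)'=(\log f)'$ where $f(x)=(qx^2-1)/(qx-1)$. Your route via $1+x^l=(1-x^{2l})/(1-x^l)$ together with the cyclotomic-type identity $\prod_l(1-x^l)^{L_q(l)}=1-qx$ reaches the closed form directly, without differentiation.

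The only real problem is a numerator/denominator swap. From $\prod_l(1-x^l)^{-L_q(l)}=(1-qx)^{-1}$ you get $\prod_l(1-x^l)^{L_q(l)}=1-qx$, and substituting $x\mapsto x^2$ gives $\prod_l(1-x^{2l})^{L_q(l)}=1-qx^2$. Hence
\[
p(x)=\frac{\prod_l(1-x^{2l})^{L_q(l)}}{\prod_l(1-x^l)^{L_q(l)}}=\frac{1-qx^2}{1-qx},
\]
not $(1-qx)/(1-qx^2)$. This is exactly the paper's $f(x)$. With the fraction the right way up, the ``delicate bookkeeping'' you worried about disappears:
\[
\frac{1-qx^2}{1-qx}=(1-qx^2)\sum_{n\ge 0}q^n x^n
=1+qx+\sum_{n\ge 2}(q^n-q^{n-1})x^n,
\]
and the coefficient of $x^n$ for $n\ge 2$ is $(q-1)q^{n-1}$ on sight. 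The sign trouble you encountered, and the recursion $\str n - q\,\str{n-2}=0$ at the end, are both artifacts of the inverted fraction; the correct recursion is $\str n=q\,\str{n-1}$ for $n\ge 3$, coming from $(1-qx)p(x)=1-qx^2$.

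In short: your argument is a genuinely different, more algebraic proof than the paper's analytic one, and it works once the single slip is repaired. Your remark that the manipulations are legitimate in $\mathbb{Z}[[x]]$ because each coefficient involves only finitely many factors is the right justification and should be stated.
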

We formally define a generating function for the number of strictly
decreasing standard decompositions as
\begin{equation}
p\left(x\right)=\sum_{n=0}^{\infty}\str n\cdot x^{n}\ ,
\end{equation}
where we set $\str 0=1$ and $\str 1=q$. If we also define a function,
\begin{equation}
f\left(x\right)=\frac{qx^{2}-1}{qx-1}=1+qx+\sum_{n=2}^{\infty}\left(q-1\right)q^{n-1}x^{n}\ ,
\end{equation}
then proving the theorem is equivalent to showing that $p=f$ on some
interval. To do this we use the following lemma.
\begin{lem}
\label{p lemma}
\begin{align}
p\left(x\right) & =\prod_{l=1}^{\infty}\left(1+x^{l}\right)^{L_q\left(l\right)}
\end{align}
\end{lem}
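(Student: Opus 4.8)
The plan is to establish the product formula for $p(x)$ by exhibiting a bijective/generating-function correspondence between strictly decreasing standard decompositions and the combinatorial objects counted by the product $\prod_{l\ge 1}(1+x^l)^{L_q(l)}$. First I would observe that by Theorem \ref{CFL theorem}, a standard decomposition of a word is determined by the (finite) multiset of Lyndon words appearing in it: once the multiset is chosen, the non-increasing order $v_1 \geql v_2 \geql \cdots \geql v_k$ is forced, and the concatenation recovers $w$ uniquely. Hence standard decompositions of all words are in bijection with finite multisets of Lyndon words, and a standard decomposition is \emph{strictly decreasing} exactly when the corresponding multiset is in fact a \emph{set} — each Lyndon word is used at most once.

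\medskip

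Next I would track the length. If a strictly decreasing decomposition uses a subset $S$ of Lyndon words, the word it produces has length $\sum_{v \in S} |v|$. So $\str n$ equals the number of finite subsets $S$ of the set of all Lyndon words with $\sum_{v\in S}|v| = n$. Organizing by the number $L_q(l)$ of Lyndon words of each length $l$, choosing such a subset amounts to, independently for each length $l$, selecting a subset of the $L_q(l)$ Lyndon words of that length; if we pick $j_l$ of them this contributes $j_l \cdot l$ to the total length and there are $\binom{L_q(l)}{j_l}$ ways to do it. Therefore
\begin{equation}
p(x) = \sum_{n=0}^\infty \str n\, x^n = \prod_{l=1}^\infty \left( \sum_{j=0}^{L_q(l)} \binom{L_q(l)}{j} x^{jl} \right) = \prod_{l=1}^\infty \left(1 + x^l\right)^{L_q(l)},
\end{equation}
by the binomial theorem applied factor-by-factor. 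I should check the boundary cases $n=0,1$ agree with the stated conventions $\str 0 = 1$ (empty subset) and $\str 1 = q$ (there are $L_q(1) = q$ Lyndon words of length one, each a singleton subset), which they do.

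\medskip

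The main point needing care — and the step I expect to be the real obstacle — is justifying the manipulation of the infinite product as a formal power series (or on a suitable interval of convergence): I must argue that for each fixed $n$ only finitely many factors contribute to the coefficient of $x^n$ (namely those with $l \le n$, since $(1+x^l)^{L_q(l)} = 1 + O(x^l)$), so the product is well-defined coefficientwise and the rearrangement into a single sum is legitimate. A clean way is to work in $\mathbb{Z}[[x]]$ with its $x$-adic topology, where the partial products $\prod_{l=1}^{N}(1+x^l)^{L_q(l)}$ converge and stabilize in each degree; alternatively one notes $L_q(l) \le q^l/l \le q^l$ so the product converges absolutely for $|x| < 1/q$, which suffices since we only need $p = f$ on an interval. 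Everything else is a routine combinatorial translation of the Chen–Fox–Lyndon theorem.
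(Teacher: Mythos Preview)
Your argument is correct and follows essentially the same approach as the paper: both establish the bijection between words with strictly decreasing standard decomposition and finite subsets of Lyndon words via the Chen--Fox--Lyndon theorem, then read off the product generating function. You have simply fleshed out the details (binomial expansion, boundary cases, and the formal-power-series justification) that the paper leaves implicit in its two-sentence proof.
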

\begin{proof}
Observe that the set of words with strictly decreasing standard decomposition
is in bijection with the set of subsets of all Lyndon words. The bijection
is implemented by taking any collection of distinct Lyndon words,
arranging them in (strictly) decreasing order, and concatenating.
That this is invertible follows from the Chen-Fox-Lyndon theorem, as
every word has a unique non-increasing standard decomposition.
\end{proof}

\begin{proof} {\em (of Theorem \ref{strictly decreasing factorization theorem})}
As $p(0)=f(0)=1$ we note that $p=f$ on $(-1,1)$ if
\begin{equation}
\frac{\ud}{\ud x}\log p=\frac{\ud}{\ud x}\log f\!
\end{equation}
on $(-1,1)$.
From Lemma \ref{p lemma},
\begin{align}
\log p & =\sum_{l=1}^{\infty}L_q(l)\log\left(1+x^{l}\right)\\
 & =\sum_{l=1}^{\infty}L_q(l)\sum_{j=1}^{\infty}\frac{(-1)^{j+1}}{j}x^{lj}\ ,
\end{align}
where the second equality is valid for $\left|x\right|<1$. Hence,
\begin{align}
\frac{\ud}{\ud x}\log p & =-\frac{1}{x}\sum_{l=1}^{\infty}\sum_{j=1}^{\infty}(-1)^{j}lL_q(l)x^{lj}\ ,\\
 & =-\frac{1}{x}\sum_{m=1}^{\infty}\sum_{l|m}(-1)^{\frac{m}{l}}lL_q(l)x^{m}\ .
\end{align}
Splitting the sum over $m$ into sums over odd and even terms respectively,
\begin{align}
\frac{\ud}{\ud x}\log p & =\frac{1}{x}\sum_{m=1}^{\infty}\, \sum_{l|2m-1}lL_q(l)x^{2m-1}-\frac{1}{x}\sum_{m=1}^{\infty}\sum_{l|2m}(-1)^{\frac{2m}{l}}lL_q(l)x^{2m}\\
 & =\frac{1}{x}\sum_{m=1}^{\infty}\sum_{l|m}lL_q(l)x^{m}-\frac{1}{x}\sum_{m=1}^{\infty}\sum_{l|2m}\big(1+(-1)^{\frac{2m}{l}}\big)lL_q(l)x^{2m}\\
 & =\frac{1}{x}\sum_{m=1}^{\infty}\sum_{l|m}lL_q(l)x^{m}-\frac{2}{x}\sum_{m=1}^{\infty}\sum_{l|m}lL_q(l)x^{2m},
\end{align}
where, in the last step we used the fact that coefficients in the
second sum vanish unless $l$ divides $m$.  Applying Lemma
\ref{lem:classic},
\begin{align}
\frac{\ud}{\ud x}\log p & =\frac{1}{x}\sum_{m=1}^{\infty}(qx)^{m}-\frac{2}{x}\sum_{m=1}^{\infty}(qx^{2})^{m}\\
 & =\frac{1}{x}\frac{qx}{1-qx}-\frac{2}{x}\frac{qx^{2}}{1-qx^{2}}\ .
\end{align}

Finally comparing this to,
\begin{equation}
\frac{\ud}{\ud x}\log f=\frac{q}{1-qx}-\frac{2qx}{1-qx^{2}}\ .
\end{equation}
completes the proof.
\end{proof}

\section{Quantum $Q$-nary graphs and their pseudo-orbits}

\label{graphs}

A graph $\G$ is a set of vertices $\V$ connected by a set of edges
$\E$. We consider graphs with directed edges where each edge $e=(u,v)\in\E$,
connects an origin vertex $o(e)=u$ to a terminal vertex $t(e)=v$. We write
$e\sim v$ if $v$ is a vertex in $e$. The number of edges $e\sim v$
is $d_{v}$ the \emph{degree} of $v$. The total number of edges is
$E=|\mathcal{E}|$.

Let $q$ and $m$ be positive integers. We define a $q$-nary graph
of order $m$ in the following way. We use an alphabet, $\alph$,
of $q$ letters and let the set of graph vertices be labeled by the
$q^{m}$ words of length $m$. The edges of the graph are labeled
by words of length $q^{m+1}$ where the first $m$ letters of the
edge label designate the origin vertex and the last $m$ letters denote
the terminal vertex. Consequently every vertex of the $q$-nary graph
has $q$ incoming edges and $q$ outgoing edges. See Figure \ref{fig:binary graph}
for an example of a binary graph with $2^{3}$ vertices and Figure
\ref{fig:ternary graph} for a ternary graph with $3^{2}$ vertices.

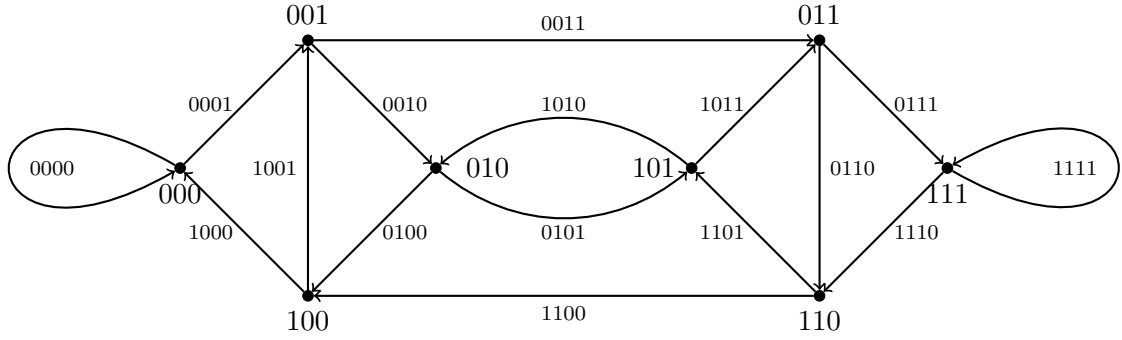
\begin{figure}[tbh]
\vspace*{8pt}

\begin{centering}
\begin{tikzpicture}[scale=1.7]
   \tikzstyle{graph node}=[draw,circle,fill=black,minimum size=4pt,inner sep=0pt]
    \draw (-3,0) node[graph node] (0) [label=-90:${000}$] {};
    \draw (-2,1) node[graph node] (1) [label=90:${001}$] {};
    \draw (-1,0) node[graph node] (2) [label=0:${\, \, \, 010}$] {};
    \draw (2,1) node[graph node] (3) [label=90:${011}$] {};
    \draw (-2,-1) node[graph node] (4) [label=-90:${100}$] {};
    \draw (1,0) node[graph node] (5) [label=180:${101}$] {};
    \draw (2,-1) node[graph node] (6) [label=-90:${110}$] {};
    \draw (3,0) node[graph node] (7) [label=-90:${111}$] {};
    \draw[thick,->] (0) -- (1) node[draw=none,fill=none,font=\scriptsize,midway,left] {$0001$};
    \draw[thick,->] (1) -- (2) node[draw=none,fill=none,font=\scriptsize,midway,right] {$0010$};
    \draw[thick,->] (1) -- (3) node[draw=none,fill=none,font=\scriptsize,midway,above] {$0011$};
    \draw[thick,->] (2) -- (4) node[draw=none,fill=none,font=\scriptsize,midway,right] {$0100$};
    \draw[thick,->] (2) to [bend right=40] (5);
    \draw[thick,->] (3) -- (6) node[draw=none,fill=none,font=\scriptsize,midway,right] {$0110$};
    \draw[thick,->] (3) -- (7) node[draw=none,fill=none,font=\scriptsize,midway,right] {$0111$};
    \draw[thick,->] (4) -- (0) node[draw=none,fill=none,font=\scriptsize,midway,left] {$1000$};
    \draw[thick,->] (4) -- (1) node[draw=none,fill=none,font=\scriptsize,midway,left] {$1001$};
    \draw[thick,->] (5) to [bend right=40] (2);
    \draw[thick,->] (6) -- (4)  node[draw=none,fill=none,font=\scriptsize,midway,below] {$1100$};
    \draw[thick,->] (6) -- (5) node[draw=none,fill=none,font=\scriptsize,midway,left] {$1101$};
    \draw[thick,->] (7) -- (6) node[draw=none,fill=none,font=\scriptsize,midway,right] {$1110$};
    \draw[thick,->] (5) -- (3) node[draw=none,fill=none,font=\scriptsize,midway,left] {$1011$};
    \path[thick,->,min distance=2cm] (0)edge[in=210,out=150]  (0);
    \path[thick,->,min distance=2cm] (7)edge[in=30,out=330] (7);
    \draw  (0,.5) node[draw=none,fill=none,font=\scriptsize] {$1010$};
    \draw  (0,-0.5) node[draw=none,fill=none,font=\scriptsize] {$0101$};
    \draw  (-4,0) node[draw=none,fill=none,font=\scriptsize] {$0000$};
    \draw  (4,0) node[draw=none,fill=none,font=\scriptsize] {$1111$};
\end{tikzpicture}
\par\end{centering}

\vspace*{8pt}
 \protect\protect\caption{A binary graph with $2^{3}$ vertices.}
\label{fig:binary graph}
\end{figure}

\begin{figure}
\vspace*{8pt}

\begin{centering}
\begin{tikzpicture}[scale=2]
   \tikzstyle{graph node}=[draw,circle,fill=black,minimum size=4pt,inner sep=0pt]
    \draw (-2,0) node[graph node] (0) [label=0:${00}$] {};
    \draw (-1,1) node[graph node] (1) [label=135:${01}$] {};
    \draw (0,-2) node[graph node] (2) [label=-90:${02}$] {};
    \draw (-1,-1) node[graph node] (3) [label=225:${10}$] {};
    \draw (0,0) node[graph node] (4) [label=90:${11}$] {};
    \draw (1,1) node[graph node] (5) [label=45:${12}$] {};
    \draw (0,2) node[graph node] (6) [label=90:${20}$] {};
    \draw (1,-1) node[graph node] (7) [label=-45:${21}$] {};
    \draw (2,0) node[graph node] (8) [label=180:${22}$] {};
    \draw[thick,->] (0) -- (1) node[draw=none,fill=none,font=\scriptsize,midway,left] {$001$};
    \draw[thick,->] (3) -- (0) node[draw=none,fill=none,font=\scriptsize,midway,left] {$100$};
    \draw[thick,->] (1) -- (4) node[draw=none,fill=none,font=\scriptsize,midway,right] {$011$};
    \draw[thick,->] (1) -- (5) node[draw=none,fill=none,font=\scriptsize,midway,above] {$012$};
    \draw[thick,->] (6) -- (1) node[draw=none,fill=none,font=\scriptsize,midway,left] {$201$};
    \draw[thick,->] (5) -- (6) node[draw=none,fill=none,font=\scriptsize,midway,right] {$120$};
    \draw[thick,->] (4) -- (3) node[draw=none,fill=none,font=\scriptsize,midway,right] {$110$};
    \draw[thick,->] (3) -- (2) node[draw=none,fill=none,font=\scriptsize,midway,left] {$102$};
    \draw[thick,->] (1) to [bend right=20] (3);
    \draw[thick,->] (3) to [bend right=20] (1);
    \draw[thick,->] (5) to [bend right=20] (7);
    \draw[thick,->] (7) to [bend right=20] (5);
    \draw[thick,->] (4) -- (5)  node[draw=none,fill=none,font=\scriptsize,midway,left] {$112$};
    \draw[thick,->] (5) -- (8) node[draw=none,fill=none,font=\scriptsize,midway,right] {$122$};
    \draw[thick,->] (8) -- (7) node[draw=none,fill=none,font=\scriptsize,midway,right] {$221$};
    \draw[thick,->] (7) -- (3) node[draw=none,fill=none,font=\scriptsize,midway,below] {$210$};
    \draw[thick,->] (7) -- (4) node[draw=none,fill=none,font=\scriptsize,midway,left] {$211$};
    \draw[thick,->] (2) -- (7) node[draw=none,fill=none,font=\scriptsize,midway,right] {$021$};
    \path[thick,->,min distance=1cm] (0)edge[in=210,out=150]  (0);
    \path[thick,->,min distance=1cm] (8)edge[in=30,out=330] (8);
    \path[thick,->,min distance=1cm] (4)edge[in=-70,out=-110] (4);
    \draw[thick,->] (6) to [bend right=40] (0);
    \draw[thick,->] (8) to [bend right=40] (6);
    \draw[thick,->] (2) to [bend right=40] (8);
    \draw[thick,->] (0) to [bend right=40] (2);
   \path[thick,->, distance=4cm] (6)edge[in=180,out=180]  (2);
    \path[thick,->, distance=4cm] (2)edge[in=0,out=0]  (6);
    \draw  (-3.25,0) node[draw=none,fill=none,font=\scriptsize] {$202$};
    \draw  (3.25,0) node[draw=none,fill=none,font=\scriptsize] {$020$};
    \draw  (-2.5,0.25) node[draw=none,fill=none,font=\scriptsize] {$000$};
    \draw  (2.5,0.25) node[draw=none,fill=none,font=\scriptsize] {$222$};
    \draw  (0,-0.85) node[draw=none,fill=none,font=\scriptsize] {$111$};
    \draw  (-1.35,0) node[draw=none,fill=none,font=\scriptsize] {$010$};
    \draw  (1.35,0) node[draw=none,fill=none,font=\scriptsize] {$212$};
    \draw  (-0.65,0) node[draw=none,fill=none,font=\scriptsize] {$101$};
    \draw  (0.65,0) node[draw=none,fill=none,font=\scriptsize] {$121$};
    \draw  (-1.5,1.5) node[draw=none,fill=none,font=\scriptsize] {$200$};
    \draw  (-1.5,-1.5) node[draw=none,fill=none,font=\scriptsize] {$002$};
    \draw  (1.5,-1.5) node[draw=none,fill=none,font=\scriptsize] {$022$};
    \draw  (1.5,1.5) node[draw=none,fill=none,font=\scriptsize] {$220$};
\end{tikzpicture}
\par\end{centering}

\vspace*{8pt}
 \protect\caption{A ternary graph with $3^{2}$ vertices.}
\label{fig:ternary graph}
\end{figure}
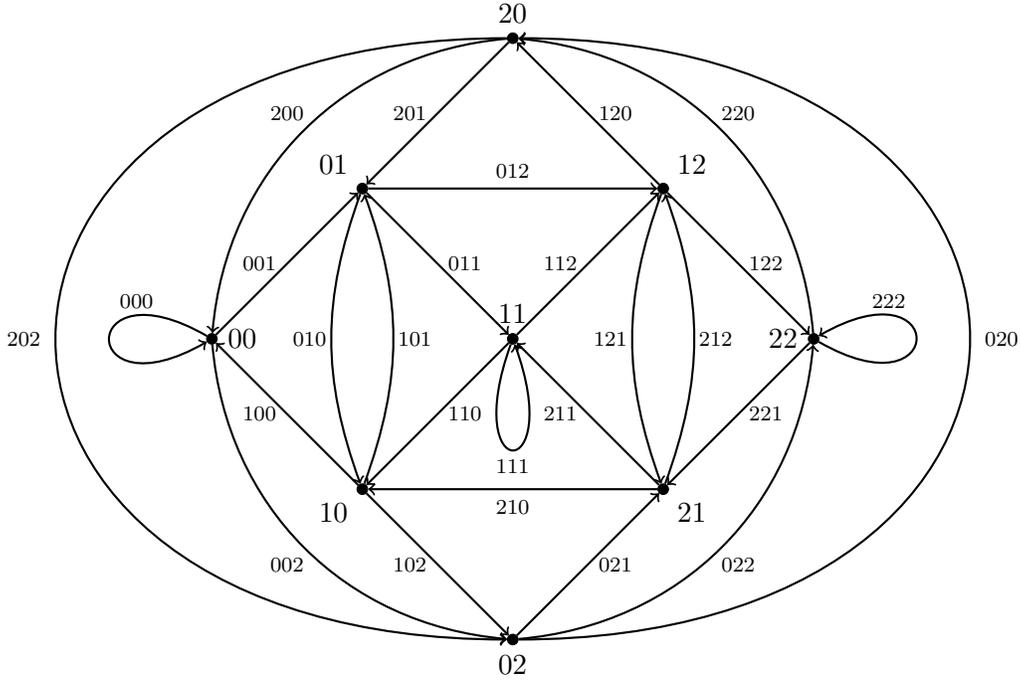

A \emph{path} $p=(v_{1},v_{2},\dots,v_{l+1})$ of topological length
$E_{p}=l$ can be labeled by a sequence of $l+1$ connected vertices
or alternatively by the corresponding connected $l$ edges $p=(e_{1},\dots,e_{l})$
where $e_{j}=(v_{j},v_{j+1})$. On the $q$-nary graph with $q^{m}$
vertices a path of length $l$ is be labeled by a word $w=a_{1},\dots,a_{l+m}$
where the connected vertices on the path are obtained by reading off
consecutive subwords of $m$ letters; so the first vertex is labeled
by $a_{1},\dots,a_{m}$ the second by $a_{2},\dots,a_{m+1}$ and so
on. Clearly every $q$-nary graph is connected as any vertex can be
reached from any other vertex by a path of at most $m$ edges. On
the other hand, a \emph{periodic orbit} $\gamma=(v_{1},\dots,v_{l},v_{1})$
of $E_{\gamma}=l$ edges, which is a closed path on $\G$, is labeled
by a word $w=a_{1},\dots,a_{l}$ of length $l$, where to obtain all
the $l$ subwords of length length $m$ the letters of $w$ are rotated
cyclically. For example, in Figure \ref{fig:binary graph} the periodic
orbit of topological length $1$ denoted by $0$ corresponds to the
loop $0000$ joining vertex $000$ to itself. Alternatively $0001$
is the periodic orbit of topological length $4$,
\[
000\to001\to010\to100\to000\ .
\]
Clearly the number of periodic orbits of length $l$ on a $q$-nary
graph is $q^{l}$. A \emph{primitive periodic orbit} is a periodic
orbit that is not a repetition of a shorter periodic orbit. We observe
that there is a bijection between the primitive periodic orbits and
Lyndon words. Indeed, a Lyndon word serves as a representative of
its conjugacy class and by definition cannot be a repetition of a
shorter word (see Section \ref{sec:introduction_to_Lyndon}).

A \emph{pseudo orbit} $\po=\{\gamma_{1},\dots,\gamma_{M}\}$ on $\G$
is a set of periodic orbits. We will use $m_{\po}=M$ to denote the
number of periodic orbits in the pseudo orbit. The topological length
of the pseudo orbit is
\begin{equation}
E_{\po}=\sum_{j=1}^{M}E_{\gamma_{j}}\ .
\end{equation}
A \emph{primitive pseudo orbit} $\ppo=\{\gamma_{1},\dots,\gamma_{M}\}$
is a set of primitive periodic orbits in which no periodic orbit appears
more than once; so primitive pseudo orbits omit repetitions of periodic
orbits both in the collection of periodic orbits and inside each periodic
orbit that makes up the pseudo orbit.

We can now see that there is a bijection between primitive pseudo
orbits and words whose standard decomposition does not contain any
Lyndon word more than once so that they are strictly decreasing. Thus,
we can apply Theorem \ref{strictly decreasing factorization theorem}
to count primitive pseudo orbits on $q$-nary graphs.
\begin{cor}
\label{cor: no. of primitive pseudo orbits} The number of primitive
pseudo orbits of topological length $n$ on a $q$-nary graph of order
$m$ is $(q-1)q^{n-1}$.
\end{cor}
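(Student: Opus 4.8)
The plan is to compose the bijections already assembled in Sections~\ref{sec:introduction_to_Lyndon} and~\ref{graphs} into a single length-preserving correspondence and then invoke Theorem~\ref{strictly decreasing factorization theorem}. Concretely, I want to exhibit a bijection between the primitive pseudo orbits $\ppo$ on a fixed $q$-nary graph of order $m$ with $E_{\ppo}=n$ and the words of length $n$ over $\alph$ whose standard decomposition is strictly decreasing, whose cardinality is $\str n$ by definition.

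First I would recall that a periodic orbit of topological length $l$ is specified by reading a word $w=a_{1}\cdots a_{l}$ cyclically, that two words specify the same orbit exactly when they are conjugate, and that every length-$l$ word does label a genuine closed path, since consecutive length-$m$ subwords differ by a cyclic shift and the corresponding edge is present in the graph. Hence primitive periodic orbits of length $l$ are in bijection with conjugacy classes of primitive words of length $l$, i.e.\ with $\lyn(l)$, the representative of a class being its strictly least element. Next, under this identification a primitive pseudo orbit $\ppo=\{\gamma_{1},\dots,\gamma_{M}\}$ with $E_{\gamma_{j}}=l_{j}$ becomes an unordered set of pairwise distinct Lyndon words $\{v_{1},\dots,v_{M}\}$, and $E_{\ppo}=\sum_{j}l_{j}=\sum_{j}|v_{j}|$. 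Finally, by the argument used to prove Lemma~\ref{p lemma}, listing the distinct Lyndon words in strictly decreasing lexicographic order and concatenating yields a word $w$ of length $E_{\ppo}$; Theorem~\ref{CFL theorem} guarantees this is the unique standard decomposition of $w$, and distinctness of the $v_{j}$ is precisely what makes it strictly, rather than only weakly, decreasing. Running this last step backwards shows every length-$n$ word with strictly decreasing standard decomposition arises exactly once.

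Composing these steps gives a bijection between primitive pseudo orbits of topological length $n$ and the set counted by $\str n$, so Theorem~\ref{strictly decreasing factorization theorem} yields the value $(q-1)q^{n-1}$ for $n\geq2$. (For $n=1$ a primitive pseudo orbit is a single loop and there are $q$ of these, so the statement is to be read with $n\geq2$, consistent with the hypothesis of Theorem~\ref{strictly decreasing factorization theorem}.)

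I do not anticipate a genuine obstacle here: all of the work has been done upstream. The one point that warrants explicit verification is that the composite map preserves length, that is, $E_{\ppo}$ equals the length of the associated word; this is immediate because each primitive periodic orbit of $l$ edges matches a Lyndon word of exactly $l$ letters and concatenation adds lengths. The only other thing to state clearly is that the no-repetition condition on the orbits in $\ppo$ corresponds exactly to the Lyndon factors being pairwise distinct, which is what ties the count to $\str n$ rather than to $q^{n}$.
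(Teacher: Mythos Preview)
Your proposal is correct and follows essentially the same route as the paper: the argument in Section~\ref{graphs} already sets up the bijection primitive periodic orbits $\leftrightarrow$ Lyndon words and then primitive pseudo orbits $\leftrightarrow$ words with strictly decreasing standard decomposition, and the corollary is simply the invocation of Theorem~\ref{strictly decreasing factorization theorem}. Your write-up is more explicit about length preservation and about the $n=1$ caveat than the paper itself, but the underlying strategy is identical.
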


\section{The characteristic polynomial of a quantum graph}

\label{characteristic polynomial}

Quantum graphs were introduced as a model system in which to study
spectral properties when the corresponding classical dynamics is chaotic
by Kottos and Smilansky \cite{KS97,KS99}. Spectral properties of
quantum binary graphs ($q$-nary graphs with $q=2$) were investigated
by Tanner in Ref. \onlinecite{T00}.

A discrete graph can be turned into a metric graph by associating
a length $l_{e}>0$ to each edge $e\in\E$. There are two main approaches
to quantize a metric graph which are closely related (see Refs. \onlinecite{BerkolaikoKuchment,GS06}).
We describe the approach adopted here. Given an arbitrary
directed metric graph, where each vertex has $q$ incoming and $q$ outgoing
edges, we equip each vertex, $v\in\V$ with a prescribed unitary $q\times q$
matrix. We call this matrix a vertex scattering matrix and denote
it by $\sigma^{(v)}$. Each entry of this matrix $\sigma_{e,e'}^{(v)}$
is a scattering transmission amplitude from edge $e'$
to edge $e$. Hence, the entries, $\sigma_{e,e'}^{(v)}$ are indexed
such that $e'$ is an edge directed towards the vertex $v$ and $e$
is directed out of it, so $v=t(e')=o(e)$.

In particular we consider $q\times q$ vertex scattering matrices of the form,
\begin{equation}
\sigma_{e,e'}^{(v)}=\frac{1}{\sqrt{q}}\left(\begin{array}{ccccc}
1 & 1 & 1 & \dots & 1\\
1 & \omega & \omega^{2} & \dots & \omega^{q-1}\\
1 & \omega^{2} & \omega^{4} & \dots & \omega^{2(q-1)}\\
\vdots & \vdots & \vdots & \ddots & \vdots\\
1 & \omega^{k-1} & \omega^{2(k-1)} & \dots & \omega^{(q-1)(q-1)}
\end{array}\right)\ ,\label{eq:DFT_vertex_scattering}
\end{equation}
where $\omega=\ue^{\frac{2\pi\ui}{q}}$ is a primitive $q$-th root of
unity.   This matrix is the Discrete Fourier Transform (DFT)
matrix.   Such vertex scattering matrices have the advantage of being democratic, in the sense that the transmission probability $|\sigma_{e,e'}^{(v)}|^2=1/q$ for every outgoing edge $e$.  Consequently graphs with the DFT vertex scattering matrices are a well studied model of quantum chaos for which spectral properties are seen to converge rapidly to the predictions of random matrix theory.\cite{T01}

All the vertex scattering matrices $\sigma^{(v)}$ can be combined
into a single $E\times E$ unitary matrix in the following way.  Fixing an arbitrary order for the $E$ graph edges we compose an
$E\times E$ matrix, $\Sigma$, whose entries are indexed by the graph
edges and set
\[
\Sigma_{e,e'}=\begin{cases}
\sigma_{e,e'}^{(v)} & ~~v=t(e')=o(e)\\
0 & ~~\textrm{otherwise}
\end{cases},
\]
where $t(e')$ marks the terminal vertex of $e'$ and $o(e)$ marks
the origin vertex of $e$.

Next we define $L=\textrm{diag}\{l_{1},\dots,l_{E}\}$
to be a diagonal matrix of all edge lengths and set $U\left(k\right)=\ue^{\ui kL}\Sigma$,
which is called the unitary (or quantum) evolution operator. The graph
spectrum is then defined as
\begin{equation}
\left\{ \left.k^{2}\right|\det\left(\UI-U\left(k\right)\right)=0\right\} \ .\label{eq:secular_cond_tmp}
\end{equation}
This is the set of eigenvalues of the negative Laplacian on the metric graph when the vertex scattering matrices are those obtained from a self-adjoint realization of the operator, see e.g. Ref. \onlinecite{BerkolaikoKuchment}.

The \emph{characteristic polynomial} of $U\left(k\right)$ is,
\begin{equation}
F_{\xi}\left(k\right)=\det\left(\xi\UI-U\left(k\right)\right)=\sum_{n=0}^{2B}a_{n}\xi^{2B-n}\ ,\label{eq:characteristic_polynomial}
\end{equation}
and we note that the graph's eigenvalues are obtained as the zeros
of $F_{\xi=1}$. The characteristic polynomial coefficients, $a_{n}$,
are the spectral quantity which we investigate here. It is shown in Ref. \onlinecite{BHJ12} that
each $a_{n}$ may be expressed as a sum over pseudo orbits on the
graph in the following way. To each periodic orbit $\gamma=(e_{1},\dots,e_{m})$
on the quantum graph it is natural to associate a metric length,
\begin{equation}
l_{\gamma}=\sum_{j=1}^{m}l_{e_{j}}
\end{equation}
and a stability amplitude, the product of the elements of the scattering
matrix around the orbit,
\begin{equation}
A_{\gamma}=\Sigma_{e_{2}e_{1}}\Sigma_{e_{3}e_{2}}\dots \Sigma_{e_{n}e_{n-1}}\Sigma_{e_{1}e_{m}}\ .\label{eq:defn A_gamma}
\end{equation}
Then a pseudo orbit $\po=\{\gamma_{1},\dots,\gamma_{M}\}$ acquires
a metric length and stability amplitude,
\begin{align}
l_{\po} & =\sum_{j=1}^{M}l_{\gamma_{j}}\ ,\\
A_{\po} & =\prod_{j=1}^{M}A_{\gamma_{j}}\ .
\end{align}
In Ref. \onlinecite{BHJ12} the authors prove the following theorem.
\begin{thm}
\label{thm:a_n_expansion} The coefficients of the characteristic
polynomial $F_{\xi}\left(k\right)$ are given by
\begin{equation}
a_{n}=\sum_{\ppo|\,E_{\ppo}=n}\left(-1\right)^{m_{\ppo}}A_{\ppo}\left(k\right)\exp\left(\mathrm{i}kl_{\ppo}\right)\ ,\label{eq:a_n_expansion}
\end{equation}
where the (finite) sum is over all the primitive pseudo orbits of
topological length $n$.
\end{thm}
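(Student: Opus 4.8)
The plan is to pass to the generating polynomial $\sum_{n\ge0}a_nt^n$, rewrite it through $\mathrm{tr}\log$ as the exponential of a sum of traces of powers of $U(k)$, evaluate those traces as sums over periodic orbits, re-sum over primitive periodic orbits, and finally expand the resulting infinite product. First, $F_\xi(k)=\det(\xi\UI-U(k))=\xi^{2B}\det(\UI-U(k)/\xi)$, so setting $t=1/\xi$ turns (\ref{eq:characteristic_polynomial}) into $\sum_{n=0}^{2B}a_nt^n=\det(\UI-tU(k))$. Using the identity $\det(\UI-tU)=\exp\,\mathrm{tr}\log(\UI-tU)=\exp\!\big(-\sum_{r\ge1}\tfrac{t^r}{r}\,\mathrm{tr}\,U(k)^r\big)$, valid for small $|t|$ and hence as an identity of formal power series, the problem reduces to evaluating $\mathrm{tr}\,U(k)^r$.

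Expanding $\mathrm{tr}\,U(k)^r=\sum_{e_1,\dots,e_r}U_{e_1e_2}U_{e_2e_3}\cdots U_{e_re_1}$ as a sum over closed edge-walks of length $r$ and using $U_{e,e'}=\ue^{\ui kl_e}\Sigma_{e,e'}$, each such walk — equivalently each periodic orbit $\gamma$ with $E_\gamma=r$ — contributes precisely $A_\gamma\ue^{\ui kl_\gamma}$, this quantity being invariant under the choice of starting edge. Grouping periodic orbits of length $r$ by their primitive root, each primitive periodic orbit $\gamma_0$ of length $\ell\mid r$ occurs (in its $\ell$ cyclic rotations) as the primitive root of exactly $\ell$ periodic orbits of length $r$, each contributing $\big(A_{\gamma_0}\ue^{\ui kl_{\gamma_0}}\big)^{r/\ell}$ since $A_{\gamma_0^{\,j}}=A_{\gamma_0}^{\,j}$ and $l_{\gamma_0^{\,j}}=j\,l_{\gamma_0}$. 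Hence
\begin{equation}
\mathrm{tr}\,U(k)^r=\sum_{\ell\mid r}\ell\!\!\sum_{E_{\gamma_0}=\ell}\!\!\big(A_{\gamma_0}\ue^{\ui kl_{\gamma_0}}\big)^{r/\ell},
\end{equation}
the inner sum running over primitive periodic orbits of length $\ell$ (the Lyndon words of length $\ell$). As a check, setting all amplitudes and phases to $1$ recovers $\sum_{\ell\mid r}\ell L_q(\ell)=q^r$, which is Lemma \ref{lem:classic}.

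Substituting this into $-\sum_{r\ge1}\tfrac{t^r}{r}\,\mathrm{tr}\,U(k)^r$ and writing $r=\ell j$, the factor $\ell$ cancels $1/(\ell j)$ and the $j$-sum becomes $\sum_{j\ge1}\tfrac1jz^j=-\log(1-z)$ with $z=t^{E_{\gamma_0}}A_{\gamma_0}\ue^{\ui kl_{\gamma_0}}$, so
\begin{equation}
\sum_{n\ge0}a_nt^n=\prod_{\gamma_0}\big(1-t^{E_{\gamma_0}}A_{\gamma_0}\ue^{\ui kl_{\gamma_0}}\big),
\end{equation}
the product over all primitive periodic orbits. Expanding it — taking $1$ from all but finitely many factors and $-t^{E_{\gamma_0}}A_{\gamma_0}\ue^{\ui kl_{\gamma_0}}$ from the remaining finite set $\{\gamma_1,\dots,\gamma_M\}$ of distinct primitive periodic orbits, i.e.\ from a primitive pseudo orbit $\ppo$ with $m_{\ppo}=M$ — contributes the monomial $(-1)^{m_{\ppo}}t^{E_{\ppo}}A_{\ppo}\ue^{\ui kl_{\ppo}}$. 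Reading off the coefficient of $t^n$ yields (\ref{eq:a_n_expansion}). The sum is finite for each $n$ because a primitive pseudo orbit of length $n$ uses at most $n$ primitive periodic orbits, each of length $\le n$, of which there are finitely many; and $a_n=0$ for $n>2B$ since the left side is a polynomial of degree $2B$.

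I expect the main obstacle to be the bookkeeping in the second step — carefully justifying the multiplicity $\ell$ attached to each primitive periodic orbit of length $\ell\mid r$ together with the cyclic-invariance identities $A_{\gamma_0^{j}}=A_{\gamma_0}^{j}$ and $l_{\gamma_0^{j}}=j\,l_{\gamma_0}$. A seemingly more direct route — expanding $\det(\xi\UI-U(k))$ by the Leibniz formula and grouping permutations by cycle type — is less convenient here, since a single permutation cycle corresponds only to a closed walk through \emph{distinct} edges, so it produces sums over edge-disjoint families of edge-simple orbits and then needs a further resummation to reach all primitive pseudo orbits; the $\mathrm{tr}\log$ argument avoids this.
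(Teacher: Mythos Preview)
The paper does not actually prove Theorem~\ref{thm:a_n_expansion}; it is quoted from Ref.~\onlinecite{BHJ12}, so there is no in-paper argument to compare against. Your proof is correct and is the standard dynamical-zeta-function derivation: the identity $\det(\UI-tU)=\exp\big(-\sum_{r\ge1}\tfrac{t^r}{r}\,\mathrm{tr}\,U^r\big)$ together with the primitive-orbit resummation gives the Euler product $\prod_{\gamma_0}\big(1-t^{E_{\gamma_0}}A_{\gamma_0}\ue^{\ui kl_{\gamma_0}}\big)$, whose expansion is precisely the primitive pseudo-orbit sum. Your bookkeeping for the multiplicity $\ell$ (the number of cyclic representatives of $\gamma_0^{\,j}$ among length-$r$ edge sequences) and the multiplicativity $A_{\gamma_0^{\,j}}=A_{\gamma_0}^{\,j}$, $l_{\gamma_0^{\,j}}=j\,l_{\gamma_0}$ is exactly right, and treating everything as an identity of formal power series in $t$ sidesteps any convergence issues with the infinite product.

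Your closing remark is also on target: the Leibniz-formula route applied to $\det(\xi\UI-U)$ naturally produces sums over permutations of edge subsets, i.e.\ over pseudo orbits that never repeat an \emph{edge} (what Ref.~\onlinecite{BHJ12} calls irreducible pseudo orbits), and one then needs an additional identity to pass from those to primitive pseudo orbits. The $\mathrm{tr}\log$ route lands directly on the primitive formulation, which is the one stated here.
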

In Ref. \onlinecite{BHJ12} Theorem \ref{thm:a_n_expansion} is used to express
the secular function, zeta function and spectral determinant in terms
of the dynamical properties of finite numbers of pseudo orbits. In
the final section we present results on the second moment
of the coefficients.

\section{Variance of coefficients of the characteristic polynomial}

\label{sec:coefficients_of_char_poly} \label{diagonal}


Typically, spectral properties of quantum chaotic systems can
be modeled by the spectrum of a corresponding ensemble of random matrices
according to the conjecture of Bohigas, Giannoni and Schmidt \cite{BGS84}.
The variance of the coefficients of the characteristic polynomial
of an $E\times E$ random scattering matrix from the Circular Orthogonal
Ensemble (COE) and Circular Unitary Ensemble (CUE) \cite{Hetal96}
are,
\begin{align}
\langle|a_{n}|^{2}\rangle_{\COE} & =1+\frac{n(E-n)}{E+1}\ ,\label{eq:RMT coeffs of char poly}\\
\langle|a_{n}|^{2}\rangle_{\CUE} & =1\ .
\end{align}
These correspond to predictions for quantum chaotic systems with and
without time-reversal symmetry respectively.   In our case the directed scattering matrices break time-reversal symmetry.  Hence directed $q$-nary graphs would be in the class of systems to be modeled by the CUE.  However, as is shown
in the following, the coefficients of the characteristic polynomial
of a quantum graph are seen to deviate from the random matrix predictions
even when other spectral-statistics match the corresponding random matrix
ensemble.


Starting from \eqref{eq:a_n_expansion}, we note that $a_{0}=1$ and
averaging over $k$ the other coefficients have mean value zero,
as the average over $k$ of $\ue^{\ui kl_{\ppo}}$ is zero for pseudo
orbits of topological length $n\geq1$. The \emph{variance} of coefficients
of the characteristic polynomial was investigated numerically for the complete graph with four vertices in Ref.\onlinecite{KS99}
and also for binary graphs numerically and theoretically in Refs. \onlinecite{T00,T01}.
The approach we take here extends this discussion to the families of $q$-nary graphs, for which we obtain analytic results.
Following
\eqref{eq:a_n_expansion} we write the variance of the coefficients
of the characteristic polynomial as a sum over pairs of primitive
pseudo orbits $\ppo,\ppo'$ of the same metric length,
\begin{align}
\langle|a_{n}|^{2}\rangle_{k} & =\sum_{\ppo,\ppo'|E_{\ppo}=E_{\ppo'}=n}(-1)^{m_{\ppo}+m_{\ppo'}}A_{\ppo}\bar{A}_{\ppo'}\lim_{K\to\infty}\frac{1}{K}\int_{0}^{K}\ue^{\ui k(l_{\ppo}-l_{\ppo'})}\ud k\\
 & =\sum_{\ppo,\ppo'|E_{\ppo}=E_{\ppo'}=n}(-1)^{m_{\ppo}+m_{\ppo'}}A_{\ppo}\bar{A}_{\ppo'}\,\delta_{l_{\ppo},l_{\ppo'}}\ .\label{eq:a_n variance}
\end{align}
When the set of edge lengths is incommensurate, i.e. linearly independent
as real numbers over the rationals, the condition that
the metric lengths of the pseudo orbits be equal requires that $\ppo$
and $\ppo'$ traverse the same edges the same number of times. Then,
in the absence of time-reversal symmetry, the first order contribution
to the variance is generated by pairing an pseudo orbit with itself,
$\ppo'=\ppo$, as in the diagonal approximation of Berry \cite{B85}.
We thus define
\begin{equation}
\langle|a_{n}|^{2}\rangle_{\textrm{diag}}=\sum_{\ppo|E_{\ppo}=n}|A_{\ppo}|^{2}\ .\label{eq:a_n variance diag}
\end{equation}
From \eqref{eq:DFT_vertex_scattering} we have that the transition
probability from any incoming edge $e'$ to any outgoing edge $e$
is always,
\begin{equation}
|\sigma_{e,e'}^{(v)}|^{2}=\frac{1}{q}\
\end{equation}
and substituting in \eqref{eq:a_n variance diag} produces,
\begin{equation}
\langle|a_{n}|^{2}\rangle_{\textrm{diag}}=\sum_{\ppo|E_{\ppo}=n}\frac{1}{q^{n}}\ .
\end{equation}
Evaluating this amounts to counting the number of primitive pseudo
orbits of topological length $n$.  Then applying Corollary \ref{cor: no. of primitive pseudo orbits}
we see the diagonal approximation to the coefficients of the characteristic
polynomial of families of $q$-nary graphs is,
\begin{equation}
\langle|a_{n}|^{2}\rangle_{\textrm{diag}}=\frac{(q-1)}{q}\ .\label{eq:a_n variance diag k-ary}
\end{equation}

According to the Bohigas-Giannoni-Schmidt conjecture one might expect, in the absence of time-reversal
symmetry, the random matrix result $\langle|a_{n}|^{2}\rangle_{\CUE}=1$.
Hence the diagonal approximation deviates from this result for each family of $q$-nary graphs in the semiclassical limit, which
for graphs is the limit of large graphs, i.e fixing $q$ and taking the length of the words to infinity.  However, the discrepancy is consistent with the results for binary graphs obtained by Tanner \cite{T02}.  There
$\langle|a_{n}|^{2}\rangle_k$ is seen to converge numerically to a constant value of $0.5$ independent of $n$.  The diagonal approximation considered here reproduces this result.  
To avoid the approximation, higher order contributions to the variance of the coefficients would come from correlations between pseudo orbits of the same length with self-intersections such as the figure of eight periodic orbits considered by Seiber and Richter \cite{S02,SR01}.

To summarize, the diagonal approximation for the pseudo orbit expansion shows a deviation from  random matrix theory which does not disappear in the semi-classical limit for fixed $q$.  However, this deviation would vanish for a sequence of quantum graphs with increasing degree, i.e. increasing $q$, which is another way of approaching the semi-classical limit.   This suggests that random matrix results for the coefficients of the characteristic polynomial may be recovered for sequences of quantum graphs although under stronger conditions than those typically required for other spectral statistics such as the form factor.


\begin{acknowledgements}
The authors would like to thank Gregory Berkolaiko, Chris Joyner, Rom Pinchasi and Uzy Smilansky for helpful discussions.  This work was partially supported by a grant from the Simons Foundation (354583 to Jon Harrison).  R.B. was supported by ISF (Grant No. 494/14), Marie Curie Actions (Grant No. PCIG13-GA-2013-618468) and the Taub Foundation (Taub Fellow).\end{acknowledgements}

\end{document}